\newcommand{\myskip}{\vspace*{8pt}}
\DeclareMathAlphabet{\eurm}{U}{eur}{m}{n}
\DeclareMathAlphabet{\eubf}{U}{eur}{b}{n}
\DeclareFontFamily{U}{UWCyr}{}
\DeclareFontShape{U}{UWCyr}{m}{n}{%
  <5> <6> <7> <8> <9>
  <10> <10.95> <12> <14.4> <17.28> <20.74> <24.88> wncyr10
  }{}
\DeclareFontShape{U}{UWCyr}{m}{it}{%
  <5> <6> <7> <8> <9>
  <10> <10.95> <12> <14.4> <17.28> <20.74> <24.88> wncyi10
  }{}
\DeclareFontShape{U}{UWCyr}{m}{sc}{%
  <5> <6> <7> <8> <9>
  <10> <10.95> <12> <14.4> <17.28> <20.74> <24.88> wncysc10
  }{}
\DeclareFontShape{U}{UWCyr}{b}{n}{%
  <5> <6> <7> <8> <9>
  <10> <10.95> <12> <14.4> <17.28> <20.74> <24.88> wncyb10
  }{}
\DeclareMathAlphabet{\cyrm}{U}{UWCyr}{m}{n}
\DeclareMathAlphabet{\cyit}{U}{UWCyr}{m}{it}
\DeclareMathAlphabet{\cysc}{U}{UWCyr}{m}{sc}
\DeclareMathAlphabet{\cybf}{U}{UWCyr}{b}{n}
\newcounter{assump}
\newtheorem{Assumption}{\indent Assumption}[assump]
\newcounter{postul}
\newtheorem{Postulate}{\indent Postulate}[postul]
\newtheoremstyle{MyThm}
  {3pt}
  {3pt}
  {\itshape}
  {\parindent}
  {\bfseries}
  {.}
  {.5em}
  {}
\theoremstyle{MyThm}
\newtheorem{Definition}{Definition}[section]
\newtheorem{Caution}[Definition]{Caution}
\newtheorem{Convention}[Definition]{Convention}
\newtheorem{Corollary}[Definition]{Corollary}
\newtheorem{Example}[Definition]{Example}
\newtheorem{Exercise}[Definition]{Exercise}
\newtheorem{Lemma}[Definition]{Lemma}
\newtheorem{Notation}[Definition]{Notation}
\newtheorem{Note}[Definition]{Note}
\newtheorem{Problem}[Definition]{Problem}
\newtheorem{Proposition}[Definition]{Proposition}
\newtheorem{Remark}[Definition]{Remark}
\newtheorem{Theorem}[Definition]{Theorem}
\newcommand{\bAs}{\begin{Assumption}\em}
\newcommand{\eAs}{\end{Assumption}}
\newcommand{\bCa}{\begin{Caution}\em}
\newcommand{\eCa}{\end{Caution}}
\newcommand{\bCr}{\begin{Corollary}\em}
\newcommand{\eCr}{\end{Corollary}}
\newcommand{\bCv}{\begin{Convention}\em}
\newcommand{\eCv}{\end{Convention}}
\newcommand{\bDf}{\begin{Definition}\em}
\newcommand{\eDf}{\end{Definition}}
\newcommand{\bDr}{\begin{Exercise}\em}
\newcommand{\eDr}{\end{Exercise}}
\newcommand{\bEx}{\begin{Example}\em}
\newcommand{\eEx}{\end{Example}}
\newcommand{\bLm}{\begin{Lemma}\em}
\newcommand{\eLm}{\end{Lemma}}
\newcommand{\bNo}{\begin{Notation}\em}
\newcommand{\eNo}{\end{Notation}}
\newcommand{\bNt}{\begin{Note}\em}
\newcommand{\eNt}{\end{Note}}
\newcommand{\bPb}{\begin{Problem}\em}
\newcommand{\ePb}{\end{Problem}}
\newcommand{\bPf}{\begin{proof}[\noindent\indent{\sc Proof}]}
\newcommand{\ePf}{\renewcommand{\qedsymbol}{}\end{proof}}
\newcommand{\bPr}{\begin{Proposition}\em}
\newcommand{\ePr}{\end{Proposition}}
\newcommand{\bPs}{\begin{Postulate}\em}
\newcommand{\ePs}{\end{Postulate}}
\newcommand{\bRm}{\begin{Remark}\em}
\newcommand{\eRm}{\end{Remark}}
\newcommand{\bTh}{\begin{Theorem}}
\newcommand{\eTh}{\end{Theorem}}
\newcommand{\bEq}{\begin{eqnarray}}
\newcommand{\eEq}{\end{eqnarray}}
\newcommand{\beq}{\begin{eqnarray*}}
\newcommand{\eeq}{\end{eqnarray*}}
\newcommand{\bCd}{\bEq\begin{CD}}
\newcommand{\eCd}{\end{CD}\eEq}
\newcommand{\bcd}{\beq\begin{CD}}
\newcommand{\ecd}{\end{CD}\eeq}
\newcommand{\bdg}{\beq\begin{diagram}}
\newcommand{\edg}{\end{diagram}\eeq}
\newcommand{\bDg}{\bEq\begin{diagram}}
\newcommand{\eDg}{\end{diagram}\eEq}
\newcommand{\ben}{\begin{enumerate}}
\newcommand{\een}{\end{enumerate}}
\newcommand{\btb}{\begin{tabbing}}
\newcommand{\etb}{\end{tabbing}}
\newcommand{\bsm}{\begin{quotation}\small}
\newcommand{\esm}{\end{quotation}}
\newcommand{\bfz}{\begin{footnotesize}}
\newcommand{\efz}{\end{footnotesize}}
\newcommand{\bsz}{\begin{scriptsize}}
\newcommand{\esz}{\end{scriptsize}}
\newcommand{\fz}{\footnotesize}
\newcommand{\Rn}{{I\!\!R}}
\newcommand{\Cn}{{\B C}}
\newcommand{\h}{\hbar}
\newcommand{\1}{\mathbf 1}
\newcommand{\imi}{{\eurm{i}\,}}
\newcommand{\coi}{{\mathfrak{i}\,}}
\newcommand{\der}{\partial}
\newcommand{\nab}{\nabla}
\newcommand{\Upa}{^{\uparrow}{}}
\newcommand{\Nat}{^{\natural}{}}
\newcommand{\Fla}{^{\flat}{}}
\newcommand{\Sha}{^{\sharp}{}}
\newcommand{\Prl}{^{\|}{}}
\newcommand{\mto}{\mapsto}
\newcommand{\com}{\circ}
\newcommand{\car}{\times}
\newcommand{\ten}{\otimes}
\newcommand{\wed}{\wedge}
\DeclareMathOperator{\con}{\lrcorner}
\newcommand{\eqv}{\,\equiv\,}
\DeclareMathOperator{\byd}{\,{\raisebox{.1ex}{$\eurm :$}{\eurm =}}\,}
\newcommand{\ucar}[1]{\underset{#1}{\times}}
\newcommand{\uten}[1]{\underset{#1}{\otimes}}
\newcommand{\db}[1]{{\,{#1}\!{#1}\,}}
\newcommand{\tfr}[2]{\tfrac{#1}{#2}\,}
\newcommand{\rtd}[1]{\sqrt{|#1|}}
\newcommand{\di}[2]{{\frac{\der_{#1}\sqrt{|#2|}}{\sqrt{|#2|}}}}
\newcommand{\diG}[3]
{{\frac{\der_{#1}(G^{#1#2}_0\sqrt{|#3|})}{\sqrt{|#3|}}}}
\newcommand{\END}{{\,\text{\footnotesize\qedsymbol}}}
\DeclareMathOperator{\id}{{id}}
\newcommand{\bAl}
{\vspace{-0.8cm}
\begin{alignat*}{2}
& \qquad\qquad\qquad\qquad\qquad\qquad\qquad\qquad\qquad\qquad
&&\qquad\qquad\qquad\qquad\qquad\qquad\qquad\qquad\qquad
\\}
\newcommand{\f}[1]{{\boldsymbol{#1}}}
\newcommand{\ba}[1]{{{\bar{#1}}}}
\newcommand{\ch}[1]{{\check{#1}}}
\newcommand{\wch}[1]{{\overset{\vee}{#1}}}
\newcommand{\dt}[1]{{\dot{#1}}}
\newcommand{\ob}[1]{{\overset{o}{#1}}}
\newcommand{\E}[1]{{\eurm{#1}}}
\newcommand{\C}[1]{{\mathcal{#1}}}
\newcommand{\B}[1]{{\mathbb{#1}}}
\newcommand{\K}[1]{{\cyrm{#1}}}
\newcommand{\alp}{\alpha}
\newcommand{\gam}{\gamma}
\newcommand{\lam}{\lambda}
\newcommand{\Del}{\Delta}
\newcommand{\The}{\Theta}
\newcommand{\Lam}{\Lambda}
\newcommand{\Ome}{\Omega}
\newcommand{\ke}{\chi}
\newcommand{\Kin}[2]{\tfrac12 \, G^0_{#1#2} \, x^{#1}_0 \, x^{#2}_0}
\newcommand{\Mom}[2]{G^0_{#1#2} \, x^{#2}_0}
\title{\bf On the geometry of the energy operator
\\
in quantum mechanics}
\author{
\bf Carlos Tejero Prieto$^1$, Raffaele Vitolo$^2$
\bigskip
\\
\fz $^1$ Departamento de Matematicas, Universidad de Salamanca
\\
\fz Pl. de la Merced 1--4, 37008 Salamanca, Spain
\\
\fz email: {\tt carlost@usal.es}
\myskip
\\
\fz $^2$Dipartimento di Matematica e Fisica ``E. De Giorgi, Universit\`a del
Salento
\\
\fz Via per Arnesano, 73100 Lecce, Italy
\\
\fz email: {\tt raffaele.vitolo@unisalento.it}
}
\date{{\small\itshape We dedicate this paper to the 70th birthday\\ of Luigi Mangiarotti and
  Marco Modugno}}
\begin{document}
\maketitle
\begin{abstract}
  We analyze the different ways to define the energy operator in geometric
  theories of quantum mechanics. In some formulations the operator contains the scalar
  curvature as a multiplicative term. We show that such term can be canceled
  or added with an arbitrary constant factor, both in the mainstream Geometric
  Quantization and in the Covariant Quantum Mechanics, developed by Jadczyk and
  Modugno with several contributions from many authors.
\end{abstract}

\section{Introduction}

One of the problems of quantum mechanical theories is the fact that it is not
possible to consistently quantize all physical observables. This fact finds its
justification in different ways: the Heisenberg principle forbids the
simultaneous localization of position and momenta observables, it is not
possible to find an irreducible representation of the space of all polynomials
in position and momenta (Groenewold--Van Hove's theorem), etc..

The problem persists in the mathematical models of quantum mechanical theories. 
Within such models, one of the most developed and successful is Geometric Quantization 
(GQ for short, see for example \cite{Sni80,Woo92}). In this theory it is possible to 
quantize the family of observables which preserve the directions of another distinguished
family of observables. The space of such directions is an integrable
lagrangian distribution on the phase space, and it is said to be a polarization.  For
example, in the Schr\"odinger quantization of particle mechanics it is possible
to quantize observables which are linear in the momenta, that is of the form $f^i(x^j)p_i + f_0(x^j)$, since
their Hamiltonian vector fields preserve the tangent vectors $\der/\der p_i$,
which span the so-called vertical polarization. However, the energy is a quadratic function of the
momenta and breaks this prescription.

A similar phenomenon occurs in a more recent geometric framework of
quantum mechanics that implements the principle of relativity, in the sense of invariance with
respect to changes of observer or reference frame. This theory
is called Covariant Quantum Mechanics (CQM).
It was initiated by Jadczyk and Modugno
\cite{JadMod92,JadMod93,JadMod94} and later developed by several other authors
\cite{Can10,JadJanMod98,JanMod02,ModSalTol06,ModTejVit00,ModTejVit08a,
  ModTejVit08b,ModVit96,SalVit00,Tej01,Vit98,Vit00}, also with extensions to
general relativistic mechanics \cite{JanMod08,JanMod09,JanVit12}. 

In this paper we compare the peculiar ways of quantizing energy in GQ and
CQM. In this process we find out interesting geometric features of the
covariant energy operator in the two theories. Let us discuss this program more
in detail.

In GQ the standard way to quantize the energy in a certain polarization is the so-called
Blattner-Kostant-Sternberg method (BKS). This method is applicable to those
observables, like the energy, whose Hamiltonian vector fields do not preserve the polarization. One starts with a wave function polarized along the chosen polarization, then both data are dragged by the flow of the energy vector field in order to produce new wave functions polarized along infinitesimally close polarizations. Under suitable conditions the wave functions polarized with respect to the original and the new polarizations are related by the so called BKS-pairing. By means of it one is able to express the change induced by dragging the wave function as a one parameter family of wave functions polarized along the original polarization. The zero time derivative of this family of wave functions gives, by definition, the action of the quantized energy operator on the initial wave function. This yields the method for obtaining a `correct' quantization of energy in GQ.

In CQM we define two quantum operators connected with energy: 1 - the Lie
derivative of wave functions with respect to the Hamiltonian vector field of the
energy; 2 - a Schr\"odinger operator obtained from a Lagrangian which is
uniquely characterized by covariance requirements. Then, the quantization of
energy is the only linear combination of the above operators that does not
depend on time derivatives. Section~\ref{Covariant Quantum Mechanics} contains
a summary of CQM with an emphasis on concepts and formulae which are
relevant to the definition of the energy operator.

In GQ the wave functions are defined as \emph{half-forms}, \emph{i.e.} sections
of a complex line bundle twisted by the square root of the
bundle of volume forms normal to the polarization. This enables one to integrate the
natural pairing between any two half-forms. Half-forms were introduced in
\cite{Bla73}. Initially, \cite{JadMod92,JadMod93,JadMod94}  CQM also used
half-forms for defining wave functions, then they have been dropped by assuming a
Hermitian metric with values in densities \cite{JadJanMod98,JanMod02,JanMod06}. In order to ease the comparison with GQ, here we use CQM as originally formulated with half-forms.

If we wish to compare GQ and CQM we should make further assumptions on GQ's
general setting. In particular, such a comparison makes sense if we consider
systems of particles, possibly with holonomic constraints, in the Schr\"odinger
representation. In other words, we consider a Riemannian manifold $\f M$ and
the symplectic manifold $T^*M$, and quantize with respect to the vertical
polarization, which is generated by the vector fields $\der/\der p_i$.

The quantization of energy in GQ and CQM leads to energy operators which differ
by a term which is a multiplication operator on the wave function by the scalar
curvature of the given spacelike metric. Let us discuss this feature more in
detail.

The scalar curvature term first made its appearance in the paper \cite{dew}, in
the context of Feynman path integral approach to quantum mechanics. Several
authors have tried to determine the factor in front of the scalar curvature
term via path integral, but they have found different results, according to
different ways of performing the integral. Similar computations within the BKS
approach, together with their relation to path integral, are presented in
\cite{Sni80,Woo92} and exhibit a scalar curvature multiplication operator of
the form
\begin{equation}
\Psi \mapsto \textstyle\frac{1}{6}r\Psi\label{eq:1},
\end{equation}
where $r$ is the scalar curvature of the Riemannian metric of the configuration
space. It shall be remarked that not all authors in GQ show this term in their
computations.

According with CQM, scalar curvature can be added to the Schr\"odinger operator
through its Lagrangian as a constant multiple of the norm of sections of the
quantum bundle. The corresponding Euler--Lagrange expressions contain the term
\begin{equation}
\Psi \mapsto kr\Psi\label{eq:2},
\end{equation}
where $k\in\Rn$.  It was proved by covariance arguments
that such a Lagrangian is unique \cite{Jan01} and that the corresponding
Schr\"odinger operator is also unique \cite{JanMod02} exactly up to the constant
factor $k$ in front of the scalar curvature, that remains undetermined.

The main result of this paper (Section~\ref{sec:energy-operator-from}) is that
the scalar curvature operator in GQ is canceled out when one performs the double
covariant derivative in the Bochner Laplacian by keeping into account that
there is a natural connection on the square root of the vertical
polarization. Indeed, polarized sections are of the form $\Psi\colon \f M \to
\f Q\otimes \sqrt{\wedge^nT^*\f M}$ (with a possible dependence on time), where
$\f Q \to \f M$ is a Hermitian complex line bundle and $n=\dim \f M$. Since $\f
M$ is a Riemannian manifold there is a natural connection on the bundle
$\wedge^nT^*\f M \to \f M$. It is natural to assume that in Bochner's Laplacian
$g^{ij}\nabla_i\nabla_j$ covariant derivatives are tensor products of the
connection on $\f Q$ which is required by the quantization process and the
connection on the square root bundle of densities.

On the other hand, a multiplication operator of the type~\eqref{eq:2} may
always be added to the energy operator without violating covariance.
To our knowledge there is no evidence in experiments of the presence of the
scalar curvature term in the energy operator. In view of our results and of the
previous results about the indeterminacy of the constant $k$ we argue that
it might be the case that such a term has simply no physical
relevance.

\section{Covariant Quantum Mechanics}
\label{Covariant Quantum Mechanics}

We start with a summary of the classical and quantum theory developed by
Jadczyk, Jany\v ska and Modugno.  The interested reader may refer, for
instance, to \cite{JadMod94,JadJanMod98,JanMod02,JanMod06} for further details.

In CQM `covariance' includes also independence from the choice of
units of measurements. For this reason, we developed a rigorous treatment of
spaces uf units of measurement; roughly speaking they have the same algebraic
structure of $\Rn^+$, but no distinguished generator over $\Rn^+$
\cite{JanModVit10}. In this paper, we assume the following ``positive
1--dimensional semi--vector spaces" over $\Rn^+$ as fundamental unit spaces:
the space $\B T$ of {\em time intervals\/}, the space $\B L$ of {\em
  lengths\/}, the space $\B M$ of {\em masses\/}. Moreover, we assume the {\em
  Planck constant\/} to be an element $\h \in \B T ^* \ten \B L^2 \ten \B M$.
We refer to a particle with mass $m \in \B M$ and charge $q \in \B T^* \ten \B
L^{3/2} \ten \B M^{1/2}$.

\subsection{Classical theory}
\label{Classical theory}

The {\em spacetime\/} is an oriented $(n+1)$--dimensional manifold $\f E$ (in
the standard case $n = 3$), the {\em absolute time\/} is an affine space
associated with the vector space $\Rn \ten \B T$, the {\em absolute time map\/}
is a fibring $t : \f E \to \f T$. We denote fibred charts of spacetime by
$(x^\lam) \equiv (x^0, x^i)$; the corresponding vector fields and forms are
denoted by $\der_0$, $\der_i$ and $d^0$, $d^i$. The tangent space and the
vertical space of $\f E$ are denoted by $T\f E$ and $V\f E$.  It is easy to
check that $\f E$ is orientable if and only if it is spacelike-orientable; that
means, that $\wedge^4 T^*\f E\to \f E$ is a trivial line bundle if and only if
$\wedge^3 V^*\f E\to \f E$ is a trivial line bundle. As usual, $V\f E\byd \ker
T\f E$.

A {\em motion\/} is a section $s : \f T \to \f E$. The {\em phase space\/} is
the first jet space of motions $J_1\f E$ (see \cite{KolMicSlo93,ManMod83} about
jet spaces).  We denote fibred charts of phase space by $(x^0, x^i;
x^i_0)$. The {\em absolute velocity\/} of a motion $s$ is its first jet
prolongation $j_1s : \f T \to J_1\f E$. An {\em observer\/} is a section $o :
\f E \to J_1\f E$ and the {\em observed velocity\/} of a motion $s$ is the map
$\nab[o] s \byd j_1s - o \com s : \f T \to \B T^* \ten V\f E$.

The {\em spacelike metric\/} is a scaled Riemannian metric of the fibres of
spacetime $g : \f E \to \B L^2 \ten (V^*\f E \uten{\f E} V^*\f E)$. Given a
particle of mass $m$, it is convenient to consider the re--scaled spacelike
metric $G \byd \tfrac{m}{\h}  g : \f E \to \B T \ten (V^*\f E \uten{\f E}
V^*\f E)$. The spacelike and spacetime volumes are, respectively, the tensor
fields
\begin{equation}
  \eta \colon \f E \to  \B L^3 \ten \wedge^3 V^*\f E,\quad
  \eta = \sqrt{|g|}\check{d}^1\wedge\check{d}^2\wedge\check{d}^3,\quad
  \bar{\eta}\colon \f E \to \B T \ten \B L^3 \ten \wedge^4 T^*\f E,\quad
  \bar{\eta} = dt\wedge\eta,
\end{equation}
where $|g|=\det(g_{ij})$ and $\check{}$ denotes vertical restriction.

The {\em gravitational field\/} is a time preserving torsion free linear
connection on the tangent bundle of spacetime $K\Nat : T\f E \to T^*\f E
\uten{T\f E} TT\f E$, such that $\nab[K\Nat] g = 0$ and the curvature tensor
$R[K\Nat]$ fulfills the condition $R\Nat{}_\lam{}^i{}_\mu{}^j =
R\Nat{}_\mu{}^j{}_\lam{}^i$. The `metricity' condition on $K\Nat$ implies that
its vertical restriction coincides with the family of Riemannian connections
induced by $g$ on the fibres of $\f E\to \f T$.  This implies that the
Christoffel symbols with three spacelike indexes are of the type $\Gamma^i_{jk}
= -\tfr12 g^{ih}(\der_jg_{hk} + \der_kg_{hj} - \der_hg_{jk})$\footnote{Note the
  difference in sign with respect to the standard convention.}.

The {\em electromagnetic field\/} is a scaled 2--form $f : \f E \to (\B L^{1/2}
\ten \B M^{1/2}) \ten \Lam^2 T^*\f E$, such that $df = 0$. Given a particle of
charge $q$, it is convenient to consider the re--scaled electromagnetic field
$F \byd \tfrac{q}{\h}  f : \f E \to \Lam^2 T^*\f E$.

The electromagnetic field $F$ can be ``added'', in a covariant way, to
the gravitational connection $K\Nat$ yielding a {\em (total)
spacetime connection\/} $K$, with coordinate expression
\beq
K_i{}^h{}_j = K\Nat{}_i{}^h{}_j ,
\quad
K_j{}^h{}_0 = K_0{}^h{}_j = K\Nat{}_0{}^h{}_j + \frac{1}{2}  F^h{}_j
,
\quad
K_0{}^h{}_0 = K\Nat{}_0{}^h{}_0 + \frac{1}{2}  F^h{}_0 .
\eeq
This turns out to be a time preserving torsion free linear connection
on the tangent bundle of spacetime, which still fulfills the properties
that we have assumed for $K\Nat$.

The spacetime fibration, the total spacetime connection and
the spacelike metric, yield, in a covariant way, a 2--form
$\Ome : J_1\f E \to \Lam^2 T^*J_1\f E$ on the phase space, with coordinate
expression
\begin{equation}\label{eq:3}
\Ome = G^0_{ij} 
\big(d^i_0 - (K_\lam{}^i{}_0 + K_\lam{}^i{}_h  x^h_0)
d^\lam\big) \wed (d^j - x^j_0  d^0) .
\end{equation}
This is a {\em cosymplectic form\/}
(see \cite{JanMod09} and refences therein for a deeper discussion on the
geometry of these objects), i.e. it fulfills
the following properties: 1) $d \Ome = 0$, 2)
$dt \wed \Ome^n : J_1\f E \to \B T \ten \Lam^nT^*J_1\f E$ is a scaled
volume form on $J_1\f E$. Conversely, the cosymplectic form
$\Ome$ characterises the spacelike metric and the total spacetime
connection. Moreover, the closedness of $\Ome$ is equivalent to the
conditions that we have assumed on $K$.

There is a unique second order connection \cite{ManMod83} $\gam : J_1\f E \to
\B T^* \ten TJ_1\f E$, such that $i_\gam \Ome = 0$.  We assume the generalised
{\em Newton's equation\/} $\nab[\gam] j_1s = 0$ as the equation of motion for
classical dynamics. Of course the above equation also admits a Lagrangian and a
Hamiltonian formulation \cite{ModVit96}. The cosymplectic form $\Ome$ admits
locally potentials of the type $\The : J_1\f E \to T^*\f E$. It turns out that
these potentials are the {\em Poincar\'e--Cartan forms\/} of the Lagrangians
$\C L$ that can be obtained as one of the two summands of the splitting $\The =
\C L + \C P$, where $\C L : J_1\f E \to T^*\f T$ and $\C P : J_1\f E \to V^*\f
E$ is the {\em momentum\/}.  These components are observer independent, but
depend on the chosen gauge of the starting Poincar\'e--Cartan form. On the
other hand, given an observer $o$, each Poincar\'e--Cartan form $\The$ splits,
according to the splitting of $T^*\f E$ induced by $o$, into the horizontal
component $- \C H[o] : J_1\f E \to T^*\f T$, which is called the observed {\em
  Hamiltonian\/}, and the vertical component $\C P[o] : J_1\f E \to V^*\f E$,
which is the observed {\em momentum\/}. We have the coordinate expressions
\begin{equation}
\C L =
\left(\frac{1}{2} G^0_{ij} x^i_0 x^j_0 + A_i x^i_0 + A_0\right) d^0,
\quad \C P = (G^0_{ij} x^j_0 + A_i) (d^i - x^i_0 d^0) ,\label{eq:4}
\end{equation}
and, in a chart adapted to $o$,
\begin{equation}
\C H[o] = \C H_0d^0 = \left(\frac{1}{2}  G^0_{ij}  x^i_0
x^j_0 - A_0\right)  d^0,
\quad \C P[o] = \C P_j d^j = (G^0_{ij}  x^j_0 + A_i)  d^i ,\label{eq:5}
\end{equation}
where $A \equiv o^* \The$.

The cosymplectic form $\Ome$ yields in a covariant way the
Hamiltonian lift of functions $f: J_1\f E \to \Rn$ to vertical vector
fields $H[f] : J_1\f E \to VJ_1\f E$; consequently, we obtain the
Poisson bracket $\{f,g\}$ between functions of phase space. Given an
observer, the law of motion can be expressed, in a non covariant way,
in terms of the Poisson bracket and the Hamiltonian.

More generally, chosen a time scale $\tau : J_1\f E \to T\f T$, the
cosymplectic form $\Ome$ yields, in a covariant way, the Hamiltonian lift of
functions $f$ of phase space to vector fields $H_\tau[f] : J_1\f E \to TJ_1\f
E$, whose time component is $\tau$. In particular, let us introduce the
cosymplectic isomorphism $\Ome\Fla\colon T_\tau J_1\f E \to T^*_\gamma J_1\f E$
between the subspace of vectors in $TJ_1\f E$ that project to $\tau$ and the
subspace of one-forms in $T^*J_1\f E$ that annihilate $\gamma$.  Let us
denote by $\Ome\Sha_\tau$ the inverse of $\Ome\Fla_\tau$.  Then we define
$H_\tau[f]=\Ome\Sha_\tau(df-\gamma . f)$. It can be proved that $H_\tau[f]$ is
projectable onto a vector field $X[f] : \f E \to T\f E$ if and only if the
following conditions hold: i) the function $f$ is quadratic with respect to the
affine fibres of $J_1\f E \to \f E$ with second fibre derivative $f'' \ten G$,
where $f'' : \f E \to \Rn \otimes \B T$, ii) $\tau = f''$. A function of this
type is called a \emph{special phase function} and has coordinate expression of
the type
\begin{equation}
f = \frac{1}{2} f^0  G^0_{ij} x^i_0 x^j_0 + f^0_i  x^i_0 + f_0 , \qquad
{\rm with} \qquad f^0, f^0_i, f_0 : \f E \to \Rn .\label{eq:6}
\end{equation}
Note that $f''=f^0u_0$, where $u_0\in \B T$ is a time scale. From now on we
will assume that $f^0$ is a constant, even if this assumption could be dropped
\cite{JadMod94}.

The vector space of special phase functions is not closed under
the Poisson bracket, but it turns out to be an $\Rn$--Lie algebra
through the covariant {\em special bracket\/}
\begin{equation}
\db[ f, g\db] = \{f,g\} + \gam(f'')\cdot g - \gam(g'')\cdot f .\label{eq:7}
\end{equation}
Moreover, the map $f \mto X_f$ turns out to be a morphism of Lie algebras; we
have the coordinate expression $X_f = f^0  \der_0 - f^i  \der_i$, where
$f^i=G^{ij}f_j$.
\subsection{Quantum theory}
\label{Quantum theory}

Let us consider a  complex
line bundle over spacetime $\f Q \to \f E$ equipped with a
Hermitian metric $h : \f Q \ucar{\f E} \f Q \to \Cn$.
We shall refer to normalised local bases $b$ of $\f Q$ and to the
associated complex coordinates $z$; accordingly, the coordinate
expression of a \emph{local section} is of the type $\Psi = \psi
 b$, with $\psi : \f E \to \Cn$.

We consider also the \emph{extended line bundle} $\f Q\Upa \to J_1\f E$,
$\f Q\Upa \byd \f Q \ucar{\f E} J_1\f E$. A family (or `system') of connections
of $\f Q$ parametrised by observers $o\colon \f E \to J_1\f E$ induces, in a
covariant way, a connection of $\f Q\Upa$, which is called {\em universal\/}
\cite{ManMod83,JadJanMod98}. A characteristic property of the universal
connection is that its contraction with any vertical vector field of the bundle
$J_1\f E \to \f E$ vanishes; in coordinates, $\K q ^0_i = 0$.

It is well known that the Picard group $\text{Pic}(M)$ of isomorphism classes
of complex line bundles over a differentiable manifold $M$ can be identified
with the second integral cohomology group $H^2(M,\mathbb Z)$ by means of the
first Chern class mapping $$c_1\colon \text{Pic}(M)\xrightarrow{\sim}
H^2(M,\mathbb Z),$$ such that given a Line bundle $L\to M$ sends its
isomorphism class $[L]\in \text{Pic}(M)$   to the first Chern class $c_1(L)\in
H^2(M,\mathbb Z)$. Since $J_1\f E\to \f E$ is an affine bundle, it follows from
elementary obstruction theory that the map induced on cohomology by pullback
along the map $J_1 \f E\to \f E$ is an isomorphism $(\cdot)\Upa\colon H^2(\f
E,\mathbb Z)\xrightarrow{\sim} H^2(J_1\f E,\mathbb Z)$. Therefore, by pulling
back complex line bundles on $\f E$ to $J_1\f E$ we also get an isomorphism of
Picard groups $(\cdot)\Upa\colon \text{Pic}(\f E)\xrightarrow{\sim}
\text{Pic}(J_1\f E)$.

We say that $\f Q \to \f E$ is a \emph{quantum bundle} if there exists a
connection $\K q$: $\f Q\Upa \to T^*J\f E \uten{J\f E} T\f Q\Upa$ on the
extended quantum bundle, called a \emph{quantum connection}, which is
Hermitian, universal and whose curvature is {$ R[\K q] = \imi \Ome \ten
  \id_{\f Q}$.} We stress that $\tfrac{1}{\h}$ has been incorporated in $\Ome$
through the re--scaled metric $G$. In a local base $b$, a quantum connection
$\K q$ is of the type \bEq \K q = \K q\Prl + \coi \The \ten \1 , \eEq where $\K
q\Prl$ is the flat connection associated with $b$ and $\The$ is a potential of
$\Ome$, the Poincar\'e--Cartan form. Given an observer $o$ we can also write
$\K q = \K q\Prl + \coi (- \C H[o] + \C P[o]) \ten \1$. We have the coordinate
expression
\begin{multline}\label{eq:8}
\K q = d^\lam \ten \der_\lam + d^i_0 \ten \der^0_i +
     \coi  \K q_\lam  d^\lam \ten (z  \der z)
\\
   = d^\lam \ten \der_\lam + d^i_0 \ten \der^0_i +
     \coi  \big(- (\Kin ij - A_0)  d^0
+
	  (\Mom ij + A_i)  d^i \big) \ten (z  \der z).
\end{multline}
Given an observer $o$ we have the {\em observed quantum connection\/}
$o^*\K q : \f Q \to T^*\f E \uten{\f E} \f Q$
with coordinate expression $o^*\K q = d^\lam \ten \der_\lam + \coi A_\lam  z
 d^\lam \ten \der z$ where $A_\lam d^\lam = o^*\The$.

A quantum connection exists if and only if the cohomology class of $\Ome$ is
integral; the equivalence classes of quantum bundles equipped with a quantum
connection are classified by the cohomology group $H^1\big(\f E, U(1)\big)$
\cite{ModVit96,Vit98}.

In what follows we assume a quantum bundle equipped with a quantum connection.

Any other quantum object is obtained, in a covariant way, from this
quantum structure. The quantum connection is defined on the extended
quantum bundle, while we are looking for further quantum objects
living on the original quantum bundle. This goal is successfully
achieved by a {\em method of projectability\/}: namely, we look for
objects of the extended quantum bundle which are projectable to the
quantum bundle and then we take their projections. Indeed, our method
of projectability turns out to be our way of implementing the
covariance of the theory; in fact, it allows us to get rid of the
family of all observers, which is encoded in the quantum connection
(through $J_1\f E$).

The quantum connection allows us to take derivatives of sections $\psi\colon \f
E \to \f Q$. We have the expression:
\begin{equation}\label{eq:9}
\nab_\lam \psi  d^\lam \eqv
(\der_\lam \psi - \coi  \K q_\lam  \psi)  d^\lam =
(\der_0 \psi + \coi  \C H_0  \psi)  d^0 +
(\der_j \psi - \coi  \C P_j  \psi)  d^j ,
\end{equation}
and its `observed' counterpart
\begin{equation}\label{eq:10}
\ob\nab_\lam \psi \eqv (\der_\lam \psi - \coi  A_\lam  \psi),
\end{equation}
where the superscript $o$ means that the covariant derivative is related to the
pull-back connection $o^*\K q$. Using the splittings of the previous section we
may also define
\begin{equation}
\ba\nab \Psi =
 (\der_0 \psi + \dt x^j_0  \der_j \psi - \coi  \C L_0  \psi)
  d^0 \ten b ,
\qquad
\wch\nab \Psi =
 (\der_j \psi - \coi  \C P_j  \psi)  \ch d^j \ten b .\label{eq:11}
\end{equation}
Furthermore, given an observer $o$ we define the {\em observed quantum
  Laplacian\/} of  $\Psi \in \C S(\f Q)$ to be the section
$\ob\Del \Psi = \bar{G}(\check\nabla^o\check\nabla^o\Psi)$ with coordinate
expression in adapted coordinates
\beq
\ob\Del \Psi =
G^{hk}  \big((\der_h - \coi  A_h) (\der_k - \coi  A_k) +
K_h{}^l{}_k  (\der_l - \coi  A_l) \big) \psi  u^0 \ten b .
\eeq

J. Jany\v{s}ka \cite{Jan95b} has proved that all covariant
quantum Lagrangians of the quantum bundle are proportional to
\begin{equation}
\E L[\Psi] = \frac{1}{2} dt \wed
\big(h(\Psi, \imi\ba\nab \Psi) + h(\imi\ba\nab \Psi, \Psi)
- (\ba G \ten h)(\wch\nab \Psi, \wch\nab \Psi)
+ k r  h(\Psi, \Psi)\big)\eta ,\label{eq:12}
\end{equation}
with coordinate expression
\begin{multline*}
{\E L}[\Psi]
= \frac{1}{2}\big( \imi(\ba \psi\der_0\psi - \psi\der_0\ba \psi)
- G^{hk}_0\der_h\ba\psi\der_k\psi
\\
+ \imi G^{hk}_0 A_h(\psi\der_k\ba\psi - \ba\psi\der_k\psi)
+ \bar{\psi}\psi(2A_0 - G^{rs}_0A_rA_s + kr_0)
\big)\rtd g d^0 \wed d^1 \wed d^2 \wed d^3
\end{multline*}
where $k$ is an arbitrary real factor and $r =r_0u^0 : \f E \to \Rn \ten \B
T^*$ is the scalar curvature of the spacelike metric $G$. The corresponding
Euler--Lagrange expression is $h\Sha(\E E[\Psi]) : \f E \to \B L^3 \ten (\f Q
\uten{\f E} \Lam^4 T^*\f E)$, with coordinate expression
\begin{multline}\label{eq:13}
h\Sha(\E E[\Psi]) =
2 \big(\coi (\der_0 - \coi A_0 + \tfr12 \di 0 g)\psi +
\tfr12 G^{hk}_0(\der_h - \coi A_h)(\der_k - \coi A_k)\psi
\\
+ \tfr12 \diG h k g
(\der_k - \coi A_k)\psi + \tfr12 kr_0\psi \big)
b \ten \rtd g d^0 \wed d^1 \wed d^2 \wed d^3.\END
\end{multline}
J. Jany\v{s}ka and M. Modugno have proved a uniqueness-by-covariance result
for  $\E E$ \cite{JanMod02}. Thus, $k$ remains undetermined in our
scheme in contrast with other authors. See
Section~\ref{sec:energy-operator-from} for a more detailed discussion.

Next, we introduce a way to associate to each quantizable function $f$ a vector
field on the quantum bundle $\f Q$. More precisely, it is proved
\cite{JadMod94,JadJanMod98} that there is a natural Lie algebra isomorphism
between quantizable functions and a space of vector fields on $\f Q$ obtained
as follows. Given $f$ there is a unique Hermitian vector field $Y\Upa_f$ on the
extended quantum bundle $\f Q\Upa$ such that it is projectable to $J_1 \f E$,
it is $\K q$-horizontal and its covariant differential $\nabla[\K q]Y\Upa_f$
takes its values in the subbundle $\B T^* \ten \f Q$, in particular $Y\Upa = \K
q(H[f]) + \imi f$. The vector field $Y\Upa_f$ turns out to be projectable onto a
vector field $Y_f$ on the quantum bundle $\f Q$, which is said to be a
\emph{quantum vector field}. We have the coordinate expression
\begin{equation}\label{eq:15}
Y_f = f^0  \der_0 - f^j  \der_j + \imi (f^0  A_0 -
f^h  A_h + f_0)  z  \der z ,
\end{equation}
The space of quantum vector fields constitute a Lie algebra; it can be proved
that it is naturally isomorphic to the Lie algebra of quantisable functions.

The quantum vector field $Y_f$ acts on the sections $\Psi$ of the quantum
bundle via the associated Lie derivative $Z_f \byd \imi L_{Y_f} $. This is
possible since $Y_f$ is projectable. Note that the Lie derivative of $\Psi$
with respect to $Y=Y^\lambda\der_\lambda + Yz\der z$ is $L_Y \Psi =
Y^\lambda\der_\lambda\psi - Y\psi$. However, it is not enough to have operators
on sections of the quantum bundle, since in view of the probabilistic
interpretation of wave functions in quantum mechanics we should be able to
compute spacelike integrals of quantum sections. With this aim in mind we are naturally led to
the introduction of half-forms. These are geometric objects that can be paired each
other in order to yield densities. Such densities can be integrated in order to
define a Hilbert space norm on the space of quantum states. Namely, we
introduce the bundles over $\f E$
\begin{equation}
  \label{eq:16}
  \f Q^\eta \byd \f Q \ten \B L^{3/2}\ten
   \sqrt{\wedge^3 V^*\f E},\qquad
  \f Q^{\bar{\eta}} \byd \f Q \ten \B T^{1/2} \ten \B L^{3/2}\ten
   \sqrt{\wedge^4 T^*\f E}
\end{equation}
whose sections are said to be \emph{half-forms}.
Here the square root of an oriented vector space is the vector space whose
tensor square is the initial vector space, and the bases of the above square
roots are the square roots of the bases of the corresponding spaces,
\emph{i.e.} the square roots of the volumes. Note that the square root of the
volume element is parallel with respect to the spacetime connection $K\Nat$.  We will
use the notation $\Psi^\eta = \Psi\ten \sqrt{\eta} = \psi\sqrt[4]{|g|}b\ten
\sqrt{\check{d}^1\wedge \check{d}^2 \wedge \check{d}^3}$, and analogously for
$\Psi^{\bar{\eta}}$. We will also make use of the symbols $\psi^\eta =
\psi\sqrt[4]{|g|}$ and $v = \check{d}^1\wedge\check{d}^2\wedge\check{d}^3$, so
that $\Psi^\eta = \psi\sqrt[4]{|g|} b\otimes \sqrt{v}$. Note that the vertical
Riemannian connection induced by the metric $g$ yields a connection on the
bundle $\sqrt{\wedge^3 V^*\f E}\to \f E$: indeed, if $f\sqrt{\eta}$ is a
section of this bundle, then
\begin{equation}
  \label{eq:17}
  \check{\nabla} (f\sqrt{\eta}) = (\der_i f
  +\frac{1}{2}\Gamma^j_{ij})\check{d}^i\otimes\sqrt{\eta}.
\end{equation}
The observed Laplacian can be defined on half-forms using the tensor product
of the connection $o^*\K q$ with the above Riemannian connection on $\sqrt{\wedge^3
  V^*\f E}$.

Now, let us define the operator
\begin{equation}\label{eq:18}
Z_f(\Psi^\eta) \byd \imi L_{Y_f}(\Psi^{\bar{\eta}})
\otimes \frac{1}{\sqrt{\bar{\eta}}}\otimes\sqrt{\eta}.
\end{equation}
Note that we cannot compute directly the Lie derivative of $\sqrt{\eta}$ with
respect to a non-vertical vector field; so, we are forced to use
$\sqrt{\bar{\eta}}$ in an obvious way. Note that we have
\begin{equation}
  \label{eq:19}
  L_{Y_f}\sqrt{\bar{\eta}} =
  \frac{1}{2}\frac{\der_\lam(Y^\lam_f\sqrt{|g|})}{\sqrt{|g|}}
   \sqrt[4]{|g|} \sqrt{u_0\ten d^0\wedge d^1\wedge d^2\wedge d^3}
\end{equation}
We have the expression
\begin{multline}
  \label{eq:20}
  Z_f(\Psi^\eta) = i\left(f^0 \ob\nabla_0 -f^i\ob\nabla_i - if_0
  + \tfr12\Big(\frac{\der_0(f^0\sqrt{|g|})}{\sqrt{|g|}}
     - \frac{\der_i(f^i\sqrt{|g|})}{\sqrt{|g|}}\Big)\right)(\psi)
   \\
   \sqrt[4]{|g|}
     b\otimes\sqrt{v}.
\end{multline}
In particular, we obtain
\begin{gather}
Z_{x^\alp}(\Psi^\eta) = x^\alp  \Psi^\eta,\quad
Z[\C P_j](\Psi) = - i \der_j (\psi^\eta)b\otimes\sqrt{v},\quad
Z[\C H_0](\Psi) = i \der_0(\psi^\eta) b\otimes\sqrt{v}.
\end{gather}

As far as the Euler--Lagrange expression \eqref{eq:13} is concerned, we can
rewrite it using an observer $o$.  Namely, we have the vector field $X^o =
o\lrcorner o^*\K q$ and the equality
\begin{equation}
  \label{eq:14}
  h\Sha(\E E[\Psi]) = 2\left(i \frac{L_{X^o}\Psi^{\bar{\eta}}}{\sqrt{\bar{\eta}}}
   +\frac{1}{2}\overset{o}{\Delta}\Psi + \frac{1}{2}kr\Psi\right)
    \otimes\bar{\eta}.
\end{equation}
The above expression yields an operator on half-forms in a natural way. Indeed
the first summand is just the Lie derivative of a half-form and we have the
equality
\begin{equation}\label{eq:21}
(\overset{o}{\Delta}\Psi)\otimes\sqrt{\eta} = \overset{o}{\Delta}(\Psi^\eta)
\end{equation}
since $\sqrt{\eta}$ is parallel with respect to the connection
induced on the half-forms bundle. For this reason we define the
\emph{Schr\"odinger operator} to be the operator $\E S$
\begin{equation}
  \label{eq:22}
  \E S(\Psi^\eta) = - \frac{i}{2} h\Sha(\E E[\Psi]) \otimes \frac{1}{\bar{\eta}}
  \otimes \sqrt{\eta} = L_{X^o}\Psi^\eta -
  \frac{i}{2}\overset{o}{\Delta}(\Psi^\eta) - \frac{1}{2}kr\Psi^\eta.
\end{equation}

Next, we consider the pre--Hilbert {\em functional quantum bundle\/}
$\f H \to \f T$ over time. This is defined as follows: for each $\tau\in \f T$
let
\begin{equation}
H_\tau\byd \{\Psi^\eta_\tau\colon \f E_\tau \to \f Q^\eta_\tau \mid \Psi^\eta_\tau =
\Psi^\eta|_{E_\tau},\ \Psi^\eta\ \text{quantum section with compact support}\}
\label{eq:23}
\end{equation}
where $\f E_\tau = t^{-1}(\tau)$.
In other words, the infinite dimensional fibres are constituted by the sections
of the quantum bundle at a given time and with compact support. The space $\f
H$ can be given the structure of an $F$-smooth manifold, in the sense
of~\cite{Fro82}. The functional quantum bundle also inherits the Hermitian
structure:
\bEq
\hat{h} : \f H \ucar{\f T} \f H \to \B C :
(\Psi^\eta_\tau,\Psi^\eta{}'_\tau) \mto \int_{\f E_\tau}
h(\Psi^\eta_\tau, \Psi^\eta{}'_\tau) ,
\eEq
which makes it a pre-Hilbert bundle.

The tangent space is defined to be the set $T\f H\byd
\cup_{\Psi^\eta_\tau \in \f H} T_{\Psi^\eta_\tau}\f H$ where
\begin{equation}
  \label{eq:24}
  T_{\Psi_\tau}\f H =\{ \zeta_{(\tau,u)}\colon T_{(\tau,u)}\f E \to T_{(\tau,u)} \f Q
  \mid T\zeta_{(\tau,u),e} = V_e\Psi^\eta_\tau\}.
\end{equation}
In other words, the coordinate expression of $\zeta_{(\tau,u)}$ is
$\zeta_{(\tau,u)} = \zeta + \der_i \psi \dot x^i$, where $\zeta$ is a
complex-valued function on $\f E$. Let us recall that any section
$\Psi^\eta\colon \f E \to \f Q$ (which is defined on a tube-like open subset)
yields the $F$-smooth section
\begin{equation}
\hat\Psi^\eta\colon \f T\to \f H,\quad
\hat\Psi^\eta(\tau)(e_\tau)=\Psi^\eta(e_\tau);
\label{eq:25}
\end{equation}
conversely, every $F$-smooth section of $\f H \to \f T$ yields a section
$\Psi^\eta$ as above, establishing a bijective correspondence.

A connection on the space $\f H$ can be introduced as a section $\chi\colon \f
H \ucar{\f T} T\f T \to T\f H$ which is linear over $\f H$ and projects onto
$\id_{T\f T}$. A connection $\chi$ acts on sections as
$\chi(\hat\Psi^\eta)=\chi \circ \hat\Psi^\eta$, with coordinate expression
$\chi(\hat\Psi^\eta) = \chi_0(\psi^\eta)u^0 + \der_i\psi^\eta \dot x^i$. The
covariant differential of sections is defined by
\begin{equation}
  \label{eq:26}
  \nabla[\chi]\hat\Psi^\eta = T\hat\Psi^\eta - \chi(\hat\Psi^\eta)
\end{equation}
with coordinate expression
$\nabla[\chi]\hat\Psi^\eta = \der_0 \psi^\eta - \chi_0(\psi^\eta)u^0$.

It is now obvious that the Schr\"odinger operator $\E S$ is the
covariant differential $\nab[\chi]$ of a connection $\chi$ on the functional
quantum bundle; hence, the quantum Lagrangian yields a lift of the
quantum connection $\K q$ of the extended quantum bundle to a
connection $\chi$ of the functional quantum bundle. The coordinate expression
of $\chi$ is
\begin{equation}\label{eq:27}
\ke_0(\Psi^\eta)
= \left(\coi A_0 + \frac{i}{2}\ob\Del_0 + \frac{1}{2}kr_0 \right)\Psi^\eta.
\end{equation}

Let us consider a quantisable function $f$. The operator $Z_f$ can be defined
on sections of the functional quantum bundle in an obvious way. Then
\begin{equation}
\hat{f} = Z_f - i f^0 \con \nab[\chi]\label{eq:28}
\end{equation}
is the unique combination of $Z_f$ and $\nab[\chi]$ which yields an
operator acting on the fibres of the functional quantum bundle. We
have the following coordinate expression
\begin{equation}\label{eq:29}
\hat{f}(\Psi^\eta)
= \left(- \frac12  f^0   \ob\Del{}_0 - \imi  f^j  \ob\nab_j
+ f_0 - \frac12  k  f^0  r_0
- \imi  \frac12  \frac{\der_j (f^j  \rtd g)}{\rtd g} \right) 
\Psi^\eta.
\end{equation}
The map
$f \mto \hat{f}$ is injective. Moreover, $\hat{f}$ is Hermitian.
We assume $\hat{f}$ to be the Hermitian {\em quantum operator\/}
associated with the quantisable function $f$. This is our {\em
correspondence principle\/}.

\bEx
Let us consider an observer $o$ and a time scale $u_0$ and let us
refer to a chart adapted to the observer and to the time scale. Then,
the quantum operators associated with the quantisable functions
$x^\alp, x^i_0, \C P_i, \C H_0$ are given, for each
$\Psi \in \C S(\f Q)$,
by
\begin{gather}\label{eq:30}
\widehat{x^\alp}(\Psi^\eta) = x^\alp  \Psi^\eta ,\qquad
\widehat{\C P_j}(\Psi^\eta) =
- i \der_j(\psi^\eta) b\otimes\sqrt{v}  ,
\\
\begin{split}
\widehat{\C H_0}(\Psi^\eta) =
\left(- \frac12 G^{hk}  \big((\der_h - \coi  A_h)
(\der_k - \coi A_k) + K_h{}^l{}_k  (\der_l - \coi  A_l)\big)(\psi)\right.
\\
\left.\hphantom{\widehat{\C H_0}(\Psi^\eta) =}
 - A_0\psi - \frac12 k r_0\psi\right) \sqrt[4]{|g|} b\otimes\sqrt{v}.\END
\end{split}
\end{gather}
\eEx

The commutator of Hermitian fibred operators on the functional quantum bundle
yields a Lie algebra structure. However from the formula
\begin{equation}
[\hat{f} , \hat{g}] = \widehat{\db[f , g\db]}
+ \big[(g'' \ten L_{Y_f} - f'' \ten L_{Y_g}) , \E S\big]
.\label{eq:31}
\end{equation}
we obtain that the correspondence principle fails to be a Lie algebra morphism
exactly on quantisable functions with nontrivial quadratic term.

\bRm
The Feynmann path integral formulation of Quantum Mechanics
 can be naturally expressed in our formalism; in
particular, the Feynmann amplitudes arise naturally via parallel
transport with respect to the quantum connection \cite{JadMod94}. So
the Feynmann path integral can be regarded as a further way to lift
the quantum connection $\K q$ to a functional quantum connection.
\eRm

\bRm
In the particular case when spacetime is flat, our quantum
dynamical equations turns out to be the standard Schr\"odinger
equation and our quantum operators associated with spacetime
coordinates, momenta and energy coincide with the standard operators.
Therefore, all usual examples of standard Quantum Mechanics are
automatically recovered in our covariant scheme.
\eRm

\bRm
The above procedure can be easily extended to classical and quantum
multi--body systems (\emph{e.g.}, the rigid body, see
\cite{ModTejVit08a,ModTejVit08b}), to particles with spin (Pauli equation
\cite{CanJadMod95}), and to a more limited extent to the Einstein relativistic
mechanics \cite{JanMod08}.
\eRm

\section{Energy operator from CQM to GQ}
\label{sec:energy-operator-from}
In this section we now restrict ourselves to the case when $\f E = \f T\car \f
M$, where $\f M$ is an orientable Riemannian manifold. Here, in principle, the
theory allows a time-dependent metric, but we will not consider this general
situation. Note that $J_1(\f T \car \f M)=\f T\car T\f M$. Our task is to
assume the same structures of the GQ theory in the case of a particle (or a
`generalized' particle in the case in which $\dim \f M \neq 3$) and compare the
energy operator from CQM with the one obtained in GQ. We will refer to
\cite{Sni80} for a detailed derivation of the energy operator in this situation
(see p.\ 120, Section 7.2, or p.\ 180, Section 10.1 for the case with a nonzero
electromagnetic field).

We require the gravitational field to be purely space-like, \emph{i.e.}
$K\Nat{}_0{}^h{}_j = K\Nat{}_j{}^h{}_0 = 0$, $K\Nat{}_0{}^h{}_0 = 0$; this
request is intrinsic in view of the splitting $\f E = \f T\car \f M$ of
spacetime. We assume $F=0$, even if we could at least consider a nonzero
magnetic field in principle.

Then, there is a natural symplectic structure on $T\f M$ which is the
pull-back of the canonical structure on $T^*\f M$ under the metric
isomorphism $g\Fla$.  The form $\Omega$ reduces to the above symplectic form,
and the second-order connection $\gamma$ is the usual geodesic spray
$\gamma\colon T\f M \to T^2\f M$. The classical theory can be completely
developed from the previous assumptions.

Since we are in a time independent situation, the CQM theory can be developed
by assuming a quantum bundle $\f Q \to \f T\car\f M$ which is the pull-back of
a Hermitian complex line bundle $\bar{\f Q}\to\f M$. In fact, it can be proved
that all quantum bundles are of this form. In the same way, if we consider the
line bundle $\bar{\f Q}\Upa=\tau_{\f M}^*\bar{\f Q}\to T\f M$, where $\tau_{\f
  M}\colon T\f M \to \f M$ denotes the tangent bundle projection, then the
extended quantum bundle ${\f Q}\Upa\to \f T\car T\f M$ is obtained by pulling
back $\bar{\f Q}\Upa\to T\f M$ to $\f T\car T\f M$. Moreover, any quantum connection
$\K q$ on ${\f Q}\Upa\to \f T\car T\f M$ that fulfills the curvature identity
$R[{\K{q}}] = \imi \Ome\ten\id_{\f Q}$ is obtained in a two step process. In
the first one we consider the connection $\bar{\K q}\Upa$ on $\bar{\f Q}\Upa\to
T\f M$ obtained by pulling back a Hermitian connection $\bar{\K{q}}$ on $\bar{
  \f Q} \to \f M$ and adding to it a suitable constant multiple of the $1$-form
$(g\Fla)^*\theta$, where $\theta$ is the Liouville form on $T^*\f M$, see \cite{lopez3} for the precise details. In the
second step, we pull back $\bar{\K{q}}$ to ${\f Q}\Upa\to \f T\times T\f M$ in
order to get the connection ${\K{q}}$. It is clear that $\bar{\f Q}\Upa\to T\f
M$ and $\bar{\K q}\Upa$ define a quantum structure in the sense of the standard
GQ.

The polarization $\f P$ that we choose is the vertical one, \emph{i.e.}  $\f
P=VT\f M=\ker\tau_{T\f M}\subset TT\f M$ which is locally spanned by the vector
fields $\partial/\partial \ddot x^i$.  We have the canonical isomorphism $VT\f
M\simeq T\f M \underset{\f M}{\times} T\f M=\tau_{\f M}^*T\f M$.  We stress
that half-forms in CQM and half-forms in GQ in the above setting are the same
(up to dependency on time, which in GQ is not explicit). Indeed, the
determinant bundle of the polarization $\f P$ is by definition $K_{\f
  P}=\det(\f P^\circ)$, where $\f P^0\subset T^*T\f M$ is the subbundle that
anihilates $\f P\subset TT\f M$. One has the following exact sequence
\begin{displaymath}
0\to \f P\to T T\f M\to \tau_{\f M}^* T\f M\to 0,
\end{displaymath}
and taking duals we get the exact sequence
\begin{displaymath}
0\to \tau_{\f M}^*T^*\f M\to T^*T\f M\to \f P^*\to 0.
\end{displaymath}
This shows that $\f P^\circ= \tau_{\f M}^*T^*\f M$ and therefore
\begin{displaymath}
K_{\f
  P}=\det(\tau_{\f M}^*T^*\f M)=\tau_{\f M}^*\det(T^*\f M)=\tau_{\f
  M}^*\Lambda^n T^*\f M,
\end{displaymath}
where $n$ is the dimension of $\f M$. Since $\f M$ is orientable, it admits a
natural metalinear structure that allows us to construct the bundle $N_{\f
  P}^{1/2}$ of half-forms normal to $\f P$ which is a square root of the
canonical bundle; i.e. $N_{\f P}^{1/2}=\sqrt{\tau_{\f M}^*\Lambda^n T^*\f
  M}=\tau_{\f M}^*\sqrt{\Lambda^n T^*\f M}$. The Lie derivative of forms
induces on the canonical bundle $K_{\f P}$ a partial covariant derivative
$\nabla$ defined along $\f P$. This, in turn, induces on $N_{\f P}^{1/2}$ a
natural partial covariant derivative $\nabla^{1/2}$ defined along $\f
P$. Therefore we can endow the line bundle $\bar{\f Q}\Upa\otimes N_{\f
  P}^{1/2} $ with the tensor product partial covariant derivative
$\nabla'=\nabla[\bar{\K q}\Upa]\otimes 1+1\otimes \nabla^{1/2}$ defined along
$\f P$. The space of $\f P$-polarized sections of the line bundle $\bar{\f
  Q}\Upa\otimes N_{\f P}^{1/2}$ is
\begin{displaymath}
\Gamma_{\f P}(\bar{\f Q}\Upa\otimes N_{\f
  P}^{1/2})=\{\xi\in \Gamma(T\f M,\bar{\f Q}\Upa\otimes N_{\f P}^{1/2})\colon
\nabla'_V\xi=0,\ \forall\ V\in\Gamma(T\f M,\f P)\}.
\end{displaymath}
The Hilbert space of GQ quantum states is given by the $L^2$-completion of
$\Gamma_{\f P}(\bar{\f Q}\Upa\otimes N_{\f P}^{1/2})$. Since the line bundle
$\bar{\f Q}\Upa\otimes N_{\f P}^{1/2} =\tau_{\f M}^*(\bar{\f Q}\otimes
\sqrt{\Lambda^n T^*\f M})\to T\f M$ is a pullback, its space of global sections
is given by
\begin{displaymath}
\Gamma(T\f
M,\bar{\f Q}\Upa\otimes N_{\f P}^{1/2})=C^\infty(T\f M)\otimes_{C^\infty(\f M)}
\Gamma(\f M,\bar{\f Q}\otimes \sqrt{\Lambda^n T^*\f M}).
\end{displaymath}
Taking into account now that the partial covariant derivative $\nabla'$ is also
a pullback, we immediately obtain the identification
\begin{displaymath}
\Gamma_{\f P}(T\f M,\bar{\f Q}\Upa\otimes N_{\f P}^{1/2})= \Gamma(\f M,\bar{\f
  Q}\otimes \sqrt{\Lambda^n T^*\f M}).
\end{displaymath}

The quantum operators on half-forms corresponding with position and momentum
observables are just the same as~\eqref{eq:30}; compare it with eq.\ 7.82, p.\
128 of \cite{Sni80}. The difference between CQM and GQ lies in the way how the
energy is quantized.

Concerning the energy $\C H_0$, from \eqref{eq:29} we have the expression
\begin{equation}
  \label{eq:32}
  \hat{\C H}_0(\Psi^\eta)
= \left(- \frac12 \ob\Del{}_0 - A_0 - \frac12  k  f^0  r_0 \right)
\Psi^\eta,
\end{equation}
and we can use \eqref{eq:21} in order to write the first summand as $- \frac12
\ob\Del{\Psi}_0\otimes\sqrt{\eta}$. In this way we realize that the only
difference between the above formula and the corresponding formulae 7.114 on
p.\ 134 ($F=0$) and 10.59 on p. 180 ($F\neq 0$) of \cite{Sni80} is the factor
in front of the scalar curvature $r$\footnote{The difference in sign is due to
  the fact that, like in \cite{Wu}, we use the opposite convention about the
  value of $r$ with respect to \cite{Sni80}}.

In order to decide which factor must be used for introducing the scalar
curvature in the energy operator we recall that in GQ the quantum operator
corresponding to the kinetic energy is obtained by the
Blattner-Kostant-Sternberg (BKS) method. This method is useful for those
observables $f$, like energy, whose Hamiltonian vector field $X_f$ does not
preserve the polarization. The value of the corresponding quantum operator on a
wave function $\Psi$ is obtained by dragging $\Psi$ using the flow of the lift
of $X_f$ to the half-form bundle and then projecting the result back to the
space of polarized sections. Following \cite[eq.\ 10.52]{Sni80}, the
computation of the energy operator by the BKS method yields at the pole of a
normal coordinate system (where $\der_ig_{jk}=0$)
\begin{equation}
  \label{eq:33}
  \C Q (\psi^\eta b\otimes\sqrt{v}) = \ob\Delta(\psi^\eta b)\otimes\sqrt{v}
\end{equation}
We stress that here $\sqrt{v}$ is just a local basis of the bundle of volume forms, and not
the global section $\sqrt{\eta}$ (see~\eqref{eq:16} and the sentences
thereafter). Then a computation at the pole of a normal coordinate system shows
that
\begin{equation}
  \label{eq:34}
  \ob\Delta(\psi^\eta b)\otimes\sqrt{v}  =(\ob\Delta(\Psi)
  + \frac{1}{6} r \Psi) \sqrt{\eta}
\end{equation}
(see also \cite{Woo92}). The right-hand side of the above formula is a globally
defined tensor, while it is not possible to interpret the left-hand side as an
intrinsic expression by means of the available connections.
\begin{Lemma}
  The following equality holds:
  \begin{displaymath}
    \ob\Delta(\psi^\eta b)\otimes\sqrt{v} = \ob\Delta(\Psi^\eta)
    - 2\bar{G}(\check\nabla^o(\psi^\eta b)\otimes\check\nabla^o\sqrt{v})
    - \psi^\eta b \otimes\bar{G}(\check\nabla^o\check\nabla^o\sqrt{v})
  \end{displaymath}
\end{Lemma}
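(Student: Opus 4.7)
The identity is essentially a Leibniz rule applied twice, so the proof plan is largely computational and very short.

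The strategy is to exploit the definition of the tensor product connection on $\bar{\f Q}\otimes\sqrt{\wedge^n V^*\f E}$, namely
\[
\check\nabla^o = \check\nabla[o^*\K q]\otimes 1 + 1\otimes\check\nabla^{1/2},
\]
and apply the Leibniz rule to $\Psi^\eta = (\psi^\eta b)\otimes\sqrt{v}$. First I compute the first covariant derivative
\[
\check\nabla^o \Psi^\eta = \check\nabla^o(\psi^\eta b)\otimes\sqrt{v} + (\psi^\eta b)\otimes\check\nabla^o\sqrt{v},
\]
which is a section of $V^*\f E \otimes \f Q^\eta$. Applying $\check\nabla^o$ a second time and using Leibniz on each of the two summands produces four terms; writing the indices out with respect to an arbitrary adapted frame one gets
\[
\check\nabla^o_i\check\nabla^o_j \Psi^\eta
= (\check\nabla^o_i\check\nabla^o_j(\psi^\eta b))\otimes\sqrt{v}
+ \check\nabla^o_i(\psi^\eta b)\otimes\check\nabla^o_j\sqrt{v}
+ \check\nabla^o_j(\psi^\eta b)\otimes\check\nabla^o_i\sqrt{v}
+ (\psi^\eta b)\otimes\check\nabla^o_i\check\nabla^o_j\sqrt{v}.
\]

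Next I contract both sides with the symmetric rescaled inverse metric $\bar G$. Since $G^{ij}$ is symmetric in $i,j$, the two mixed terms add and yield $2\bar G(\check\nabla^o(\psi^\eta b)\otimes\check\nabla^o\sqrt{v})$. The remaining two terms are by definition $\ob\Delta(\psi^\eta b)\otimes\sqrt{v}$ and $(\psi^\eta b)\otimes \bar G(\check\nabla^o\check\nabla^o\sqrt{v})$. Therefore
\[
\ob\Delta(\Psi^\eta) = \ob\Delta(\psi^\eta b)\otimes\sqrt{v} + 2\bar G(\check\nabla^o(\psi^\eta b)\otimes\check\nabla^o\sqrt{v}) + (\psi^\eta b)\otimes \bar G(\check\nabla^o\check\nabla^o\sqrt{v}),
\]
and solving for $\ob\Delta(\psi^\eta b)\otimes\sqrt{v}$ gives the stated equality.

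There is essentially no obstacle: the only subtle point is to make sure that the Bochner Laplacian on the tensor product bundle really is defined via the tensor product connection (so that the Leibniz rule applies in the expected way), and that the cross-term appears with a factor $2$ because of the symmetry of $\bar G$; both facts were already set up in the paragraph preceding the lemma, where the observed Laplacian on half-forms is defined through the tensor product of $o^*\K q$ with the Riemannian connection on $\sqrt{\wedge^n V^*\f E}$.
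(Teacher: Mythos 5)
Your proof is correct and follows essentially the same route as the paper: expand $\ob\Delta(\Psi^\eta)=\bar G(\check\nabla^o\check\nabla^o(\psi^\eta b\otimes\sqrt{v}))$ by the Leibniz rule for the tensor product connection, use the symmetry of $\bar G$ to collect the cross terms with a factor of $2$, identify $\bar G(\check\nabla^o\check\nabla^o(\psi^\eta b))\otimes\sqrt{v}$ with $\ob\Delta(\psi^\eta b)\otimes\sqrt{v}$, and rearrange. The only difference is presentational: you write out the four second-derivative terms in indices before contracting, whereas the paper states the three-term expansion directly.
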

\begin{proof}
  Indeed we have
  \begin{align}
  \label{eq:359}
  \ob\Delta(\Psi^\eta) &= \bar{G}(\check\nabla^o\check\nabla^o
  (\psi^\eta b\otimes\sqrt{v}))
  \\
  &=\bar{G}(\check\nabla^o\check\nabla^o(\psi^\eta b))\otimes\sqrt{v}
  + 2\bar{G}(\check\nabla^o(\psi^\eta b)\otimes\check\nabla^o\sqrt{v})
  + \psi^\eta b \otimes\bar{G}(\check\nabla^o\check\nabla^o\sqrt{v}),
  \end{align}
  and the statement is proved by observing that
  \begin{displaymath}
    \ob\Delta(\psi^\eta b)\otimes\sqrt{v} =
    \bar{G}(\check\nabla^o\check\nabla^o(\psi^\eta b))\otimes\sqrt{v}.
  \end{displaymath}
\end{proof}
\begin{Theorem}
  At the pole of a normal coordinate system we have
  \begin{displaymath}
    \ob\Delta(\Psi^\eta) = \ob\Delta(\psi^\eta b)\otimes\sqrt{v}
    - \frac{1}{6} r \Psi^\eta.
  \end{displaymath}
\end{Theorem}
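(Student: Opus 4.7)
The plan is to combine the Lemma with two local computations carried out at the pole of the chosen normal coordinate system. Rearranging the identity of the Lemma, I get
\begin{align*}
\ob\Delta(\Psi^\eta) = \ob\Delta(\psi^\eta b)\otimes\sqrt{v} + 2\bar{G}\bigl(\check\nabla^o(\psi^\eta b)\otimes\check\nabla^o\sqrt{v}\bigr) + \psi^\eta b \otimes\bar{G}\bigl(\check\nabla^o\check\nabla^o\sqrt{v}\bigr),
\end{align*}
so the theorem reduces, at the pole $p$ of the coordinate system, to the two claims: (i) $\check\nabla^o\sqrt{v}\big|_p = 0$, which kills the mixed middle term; and (ii) $\bar{G}\bigl(\check\nabla^o\check\nabla^o\sqrt{v}\bigr)\big|_p = -\tfrac{1}{6}\, r\,\sqrt{v}\big|_p$, which feeds the full curvature correction into the last term.

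Claim (i) is immediate from formula~\eqref{eq:17}: since $\check\nabla\sqrt{v} = \tfrac{1}{2}\Gamma^j_{ij}\check d^i\otimes\sqrt{v}$ and every Christoffel symbol vanishes at the pole of a normal coordinate system, $\check\nabla^o\sqrt{v}$ vanishes at $p$. For claim (ii) I differentiate once more. In the resulting expression for $\check\nabla_i\check\nabla_j\sqrt{v}$, every contribution containing a bare factor of a Christoffel symbol (either from the connection acting on the covector index, from the half-form factor, or from the quadratic cross term) dies at $p$, leaving only the derivative of the trace,
\begin{align*}
\check\nabla_i\check\nabla_j\sqrt{v}\big|_p = \tfrac{1}{2}\,\partial_i\Gamma^k_{jk}\big|_p\,\sqrt{v}\big|_p.
\end{align*}
The identity $\Gamma^k_{jk} = \pm\,\partial_j\log\sqrt{|g|}$, combined with the Riemann normal-coordinate expansion $\sqrt{|g|}(x) = 1 - \tfrac{1}{6}\,\mathrm{Ric}_{ij}(p)\,x^i x^j + O(|x|^3)$, then expresses $\partial_i\Gamma^k_{jk}\big|_p$ as a constant multiple of $\mathrm{Ric}_{ij}(p)$. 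Contracting with $G^{ij}$, which at $p$ is proportional to $\delta^{ij}$, delivers the scalar curvature with the universal coefficient $\tfrac{1}{6}$ predicted in (ii).

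The main obstacle is not the local computation but the careful bookkeeping of the two independent sign conventions adopted in the paper: the Christoffel symbols are defined with the opposite sign from the standard convention (cf.\ the footnote in Section~\ref{Classical theory}), and the scalar curvature $r$ is taken with the opposite sign with respect to \cite{Sni80}. Both reversals must be propagated through the calculation so that the coefficient $\tfrac{1}{6}$ coming from the expansion of $\sqrt{|g|}$ combines with them to yield exactly $-\tfrac{1}{6}\,r$, and not $+\tfrac{1}{6}\,r$, in the final formula. No further geometric input is needed beyond the Lemma, the explicit formula~\eqref{eq:17} for the half-form connection, and the Riemann normal-coordinate expansion of the metric.
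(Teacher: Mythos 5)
Your argument is correct and follows essentially the same route as the paper: rearrange the Lemma, observe that at the pole the mixed term dies because $\check\nabla^o\sqrt{v}$ is proportional to Christoffel symbols, and reduce the remaining term to $\tfrac12 G^{ij}\der_i\Gamma^h_{jh}$ evaluated at the pole. The only (minor) divergence is in the last step, where you extract the coefficient $\tfrac16$ from the Riemann normal-coordinate expansion of $\sqrt{|g|}$ and the trace identity for $\Gamma^k_{jk}$, whereas the paper instead quotes the coordinate identities $r=\tfrac{3}{2}G^{ik}g^{pq}\der_i\der_k g_{pq}$ and $\tfrac{1}{2}G^{ij}\der_i\Gamma^{h}_{jh} = -\tfrac{1}{4}G^{ij}g^{hk}\der_i\der_j g_{hk}$; both are standard and equivalent.
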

\begin{proof}
   Using the above Lemma we have
\begin{align}
  \begin{split}\label{eq:36}
  \ob\Delta(\Psi^\eta)&=\ob\Delta(\psi^\eta b)\otimes\sqrt{v}
  + {G}^{ij}(\check\nabla^o_i(\psi b)\sqrt[4]{|g|}
    \otimes\Gamma^h_{jh}\sqrt{v})
  +{G}^{ij}\psi b\der_i\sqrt[4]{|g|}
    \otimes\Gamma^h_{jh}\sqrt{v}
  \\
  &\hphantom{={}} +\psi^\eta b
    \otimes{G}(\check\nabla^o(\frac{1}{2}\Gamma^{h}_{ih}
    \check{d}^i\otimes\sqrt{v}))
  \end{split}
  \\
  \begin{split}\label{eq:37}
  &=\ob\Delta(\psi^\eta b)\otimes\sqrt{v}
  + {G}^{ij}(\check\nabla^o_i(\psi b)\sqrt[4]{|g|}
    \otimes\Gamma^h_{jh}\sqrt{v})
  \\
  &\hphantom{={}}
  +\frac{1}{2}\psi^\eta b\otimes{G}^{ij}(\der_i\Gamma^{h}_{jh}
    + \Gamma^{k}_{ij}\Gamma^{h}_{kh} - \frac{1}{2}
    \Gamma^{k}_{ik}\Gamma^{h}_{jh})
    \sqrt{v}))
  \end{split}
\end{align}
At the pole of the normal coordinate system we have $\Gamma^i_{jk} = 0$, hence
\begin{align}
  \label{eq:38}
  \ob\Delta(\Psi^\eta)&=\ob\Delta(\psi^\eta b)\otimes\sqrt{v} +
  \frac{1}{2}\psi^\eta b\otimes{G}^{ij}\der_i\Gamma^{h}_{jh}\sqrt{v};
\end{align}
it can be proved that $r=\tfrac{3}{2}G^{ik}g^{pq}\der_i\der_k g_{pq}$ and that
$\tfrac{1}{2}{G}^{ij}\der_i\Gamma^{h}_{jh} =
- \tfrac{1}{4}G^{ij}g^{hk}\der_i\der_j g_{hk}$, so that the statement is proved.
\end{proof}

So, if, according to CQM, we define the energy operator using the Bochner
Laplacian which takes into account the Riemannian connection on the square root
bundle $\sqrt{\wedge^3 T^*\f M}\to \f M$ then the scalar curvature term which
arises in GQ is canceled by a similar term arising from the covariant derivative
of the base of sections of the square root bundle.

On the other hand, while the `intermediate' term $\ob\Delta(\psi^\eta
b)\otimes\sqrt{v}$ obtained by the BKS method has no intrinsic meaning, the
final result (\emph{i.e.} the right-hand side of \eqref{eq:34}) is intrinsic.
CQM allows us to recover this term by adding a term with the scalar curvature
multiplied by an arbitrary coefficient, using our quantum Lagrangian approach.
So we can, in a sense, recover the expression of \cite{Sni80} through CQM.

\section{Conclusions}

We have discussed the two ways for defining the quantum energy operator proposed by CQM and GQ. The energy operators obtained by these two methods differ by a multiplication operator by a constant times the scalar curvature. This constant can be arbitrarily modified if one uses the
Lagrangian approach, or even completely removed if one uses covariant
derivatives of half-forms.

It is a well-known feature of GQ that non-trivial examples are very few since
it is very easy to run into topological obstructions and several other complications.
In all known examples of GQ whose spectral problem for the energy operator has been analyzed, the scalar curvature term is just zero or a constant. Among the latter ones  we can mention the results obtained for the Landau problem on Riemann surfaces \cite{lopez3, Tej99, LopTej01, Tej06-1, Tej06} and  for the rigid body  \cite{Tej00, Tej01}. In these cases the spectrum of the Schr\"odinger operator is modified by an overall shift.

At the moment, we can only say that the possibility that the scalar curvature
plays no r\^ole in quantum mechanics is not remote. One possibility is that one might be
able to modify the BKS method in such a way as to incorporate the action of the
Riemannian connection on the square root bundle. In general, most well known
polarizations are endowed with a fibre metric and therefore in principle it should be possible
to define a Riemannian connection acting on 
half-forms. Another possibility is that the whole BKS procedure could be re-expressed in an
`infinitesimal' way through the parallel transport of the connection $\chi$ on
the infinite-dimensional bundle. It is known \cite{JadMod92,JadMod93,JadMod94}
that such a parallel transport leads to the Feynman integral formulation, and this could also be
useful in order to perform a covariant analysis of De Witt's approach
\cite{dew}.

We hope to solve the problem of scalar curvature in quantum mechanics in a
future research.  

\fz

\end{document}